\newtheorem{theorem}{Theorem}
\newtheorem{definition}{Definition}
\newcommand{\remrkblue}[2]{\textcolor{blue}{\textsc{#1:}}\textcolor{blue}{\textsf{#2}}}
\newcommand{\manfred}[2][]{\remrkblue{Manfred #1}{#2}}
\DeclareMathOperator{\sgn}{sgn}
\DeclareMathOperator{\conv}{conv}
\DeclareMathOperator{\interior}{int}
\title{A SAT attack on higher dimensional Erd\H{o}s--Szekeres numbers\thanks{
		An extended abstract of this article appeared at EuroComb~2021 \cite{Scheucher2021_eurocomb}.
		The author was supported 
		by the DFG Grant SCHE~2214/1-1 ``Approaching Problems from Combinatorics and Geometry using Computer Assistance''.
		We thank Helena Bergold for many valuable comments and Ralf Hoffmann, Sven Grottke, and Norbert Paschedag for the technical support.
	}}
\author{Manfred Scheucher}
\date{}
\begin{document}
	
\author{
	Manfred Scheucher
}

\maketitle

\begin{center}
	{
		Institut f\"ur Mathematik, \\
		Technische Universit\"at Berlin, Germany,\\
		\texttt{lastname@math.tu-berlin.de}
		\\\ \\
	}
\end{center}


\begin{abstract}
	A famous result by Erd\H{o}s and Szekeres (1935) asserts that,
	for all $k,d \in \mathbb{N}$, there is a smallest
	integer $n = g^{(d)}(k)$
	such that every set of at least $n$ points in $\mathbb{R}^d$ in general position 
	contains a \emph{$k$-gon}, that is, 
	a subset of $k$ points which is in convex position.
	In this article, we present a SAT model 
	based on acyclic chirotopes (oriented matroids)
	to investigate 
	Erd\H{o}s--Szekeres numbers in small dimensions.
	To solve the SAT instances we use modern SAT solvers 
	and all our unsatisfiability results are verified using DRAT certificates.
	We show 
	$g^{(3)}(7) = 13$,
	$g^{(4)}(8) \le 13$,
	and
	$g^{(5)}(9) \le 13$,
	which are the first improvements for decades.
	For the setting of \emph{$k$-holes} (i.e., $k$-gons with no other points in the convex hull), 
	where  $h^{(d)}(k)$ denotes the minimum number $n$ 
	such that every set of at least $n$ points in $\mathbb{R}^d$ in general position
	contains a $k$-hole,
	we show
	$h^{(3)}(7) \le 14$, 
	$h^{(4)}(8) \le 13$, and
	$h^{(5)}(9) \le 13$.
	Moreover, all obtained bounds are sharp in the setting of acyclic chirotopes
	and we conjecture them to be sharp also in the original setting of point sets.
	As a byproduct, we verify previously known bounds. In particular, 
	we present the first computer-assisted proof of the
	upper bound $h^{(2)}(6)\le g^{(2)}(9) \le 1717$ by Gerken (2008).
\end{abstract}


\goodbreak

\section{Introduction}

A set of points~$S$ in the plane is \emph{in general position} if no three points of $S$ lie on a common line and \emph{in convex position} if all point of $S$ lie on the boundary of the convex hull $\conv(S)$.
The classical Erd\H{o}s--Szekeres Theorem \cite{ErdosSzekeres1935}  asserts that every sufficiently large point set in the plane in general position
contains a \emph{$k$-gon}, that is, a subset of $k$ points which is in convex position.

\begin{theorem}[{\cite{ErdosSzekeres1935}, The Erd\H{o}s--Szekeres Theorem}]
	For every integer $k \ge 3$, 
	there is a smallest
	integer $n = g^{(2)}(k)$ 
	such that every set of at least $n$ points in general position in the plane
	contains a $k$-gon.
\end{theorem}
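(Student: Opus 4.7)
The plan is to establish a finite upper bound on $g^{(2)}(k)$ via the classical \emph{cups and caps} argument of Erd\H{o}s and Szekeres. After a generic rotation (which preserves general position) I may assume that no two points of the given set $S$ share an $x$-coordinate, so the points carry a natural left-to-right order. Define a \emph{$k$-cup} to be a sequence of $k$ points, ordered by $x$-coordinate, whose consecutive slopes are strictly increasing, and an \emph{$\ell$-cap} to be the analogous sequence with strictly decreasing slopes. Since any $k$-cup and any $k$-cap is in convex position, it suffices to bound $N(k,\ell)$, the smallest $n$ for which every $n$-point set in general position contains a $k$-cup or $\ell$-cap, and then specialize to $k=\ell$.

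The central step is to prove by induction on $k+\ell$ that $N(k,\ell)\le \binom{k+\ell-4}{k-2}+1$. The base cases $k=2$ and $\ell=2$ are trivial. For the inductive step, given $|S|\ge \binom{k+\ell-4}{k-2}+1$ and assuming $S$ contains neither a $k$-cup nor an $\ell$-cap, call a point $p\in S$ \emph{distinguished} if it is the right endpoint of some $(k-1)$-cup in $S$, and let $E$ be the set of distinguished points. A short case analysis using the predecessor of $p$ in such a $(k-1)$-cup shows that $E$ contains no $(\ell-1)$-cap (otherwise one either prepends the predecessor to obtain an $\ell$-cap, or extends the $(k-1)$-cup by the next point of the would-be cap to obtain a $k$-cup), and by definition $S\setminus E$ contains no $(k-1)$-cup. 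Applying the inductive hypothesis to both parts and summing yields
\[
|S|\le \binom{k+\ell-5}{k-2}+\binom{k+\ell-5}{k-3}=\binom{k+\ell-4}{k-2},
\]
using Pascal's identity, which contradicts the size assumption.

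Setting $k=\ell$ then gives $g^{(2)}(k)\le \binom{2k-4}{k-2}+1$ and proves the theorem. The main obstacle is not depth but bookkeeping: one must formulate the notion of distinguished point so that the two subclasses genuinely avoid a $(k-1)$-cup and an $(\ell-1)$-cap respectively, and verify that the recurrence $N(k-1,\ell)+N(k,\ell-1)-1\le N(k,\ell)$ matches Pascal's identity exactly. As an alternative, one could apply Ramsey's theorem to $4$-element subsets, coloring each by whether its four points are in convex position, and invoke the elementary observation that a set whose every four points are in convex position is itself in convex position; this route is conceptually cleaner but produces far weaker quantitative bounds than the cups-and-caps method.
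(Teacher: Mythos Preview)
The paper does not give its own proof of this theorem; it simply states the result as classical background and cites \cite{ErdosSzekeres1935}, noting immediately afterward that Erd\H{o}s and Szekeres proved $g^{(2)}(k)\le\binom{2k-4}{k-2}+1$. Your proposal reproduces precisely that classical cups-and-caps argument, arriving at the same bound, so in substance you are supplying exactly the proof that the paper's citation points to.

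Your argument is correct. One small slip worth fixing: in your closing discussion you write the recurrence as $N(k-1,\ell)+N(k,\ell-1)-1\le N(k,\ell)$, but the inequality you actually establish (and need) is the reverse, $N(k,\ell)\le N(k-1,\ell)+N(k,\ell-1)-1$. The body of the proof has this the right way round; only the summary sentence is inverted. Also, in the case analysis it is the \emph{leftmost} point of the putative $(\ell-1)$-cap in $E$ whose predecessor in a $(k-1)$-cup you use, so it would read more cleanly to say that explicitly rather than ``the predecessor of $p$'' for a generic distinguished point.
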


Erd\H{o}s and Szekeres showed that $g^{(2)}(k) \le \binom{2k-4}{k-2}+1$  \cite{ErdosSzekeres1935} 
and constructed point sets of size $2^{k-2}$ without $k$-gons \cite{ErdosSzekeres1960},
which they conjectured to be extremal.
There were several improvements of the upper bound in the past decades, each of magnitude $4^{k-o(k)}$, and
in 2016,
Suk showed $g^{(2)}(k) \le 2^{k+o(k)}$ \cite{Suk2017}. Shortly after,
Holmsen et al.\ \cite{HolmsenMPT2020} slightly improved the error term in the exponent and presented a generalization to chirotopes (see Section~\ref{sec:prelim} for a definition).
The lower bound
$g^{(2)}(k) \ge 2^{k-2}+1$ is known to be sharp for $k \le 6$.
The value $g^{(2)}(4)=5$ was determined by Klein in 1933 and
$g^{(2)}(5)=9$ was soon later determined by Makai; for a proof see \cite{KalbfleischKalbfleischStanton1970}, \cite{Bonnice1974} or \cite{MorrisSoltan2000}.
However, several decades have passed until Szekeres and Peters \cite{SzekeresPeters2006} managed to show
$g^{(2)}(6)=17$ by using computer assistance.
While their computer program uses thousands of CPU hours,
we have developed a SAT framework \cite{Scheucher2020_CGTA}
which allows to verify this result within only 2 CPU hours,
and an independent verification of their result using SAT solvers was done by Mari\'c \cite{Maric2019}.

\subsection{\texorpdfstring{Planar $k$-holes}{Planar k-holes}}
\label{sec:6holes}

In the 1970's, Erd\H{o}s \cite{Erdos1978} asked 
whether every sufficiently large point set contains a \emph{$k$-hole},
that is, a $k$-gon with the additional property that 
no other point lies in its convex hull.
In the same vein as $g^{(2)}(k)$, we denote by $h^{(2)}(k)$ the smallest integer
such that every set of at least $h^{(2)}(k)$ points in general position in the plane contains a $k$-hole.
This variant differs significantly 
from the original setting 
as Horton managed to construct arbitrarily large point sets
without 7-holes \cite{Horton1983} 
(see \cite{Valtr1992a} for a generalization called \emph{Horton sets}).
While Harborth \cite{Harborth1978} showed $h^{(2)}(5)=10$,
the existence of \mbox{6-holes} remained open until 2006,
when Gerken \cite{Gerken2008} and Nicol\'as \cite{Nicolas2007} independently showed 
that sufficiently large point sets contain 6-holes. 
We also refer the interested reader to Valtr's simplified proof \cite{Valtr2009}.
Today the best bounds concerning 6-holes are $30 \le h^{(2)}(6)\le 1717$.
The lower bound is witnessed by a set of 29 points without 6-holes
that was found via a simulated annealing approach by Overmars \cite{Overmars2002}.
The upper bound $h^{(2)}(6)\le 1717$ is by Gerken\footnote{
Koshelev \cite{Koshelev2009} claimed that $h^{(2)}(6)\le 463$,
but his proof is incomplete. 
In his 50 pages, written in Russian 
(see \cite{Koshelev2007} for a 3-page extended abstract written in English), 
he classifies how potential counterexamples may look like 
and then the article suddenly stops.
In fact,
in the last sentence of Section~3.1,  Koshelev even writes 
that the completion of the proof will be given in the next publication, which in the author's knowledge did not appear.
},
who made an elaborate case distinction on 34 pages 
to show that every 9-gon -- independent of how many interior points it contains -- yields a 6-hole.
Using the estimate $g^{(2)}(k) \le \binom{2k-5}{k-2}+1$ by T\'oth and Valtr \cite{TothValtr2004}, 
he then concluded $h^{(2)}(6) \le g^{(2)}(9)\} \le  \binom{11}{5}+1 = 1717$.

\subsection{Higher dimensions}
\label{ssec:EST_3space}

The notions \emph{general position} (no $d+1$ points in a common hyperplane), \emph{$k$-gon} (a set of $k$ points in convex position), and \emph{$k$-hole} (a $k$-gon with no other points in the convex hull) 
naturally generalize to higher dimensions, 
and so does the Erd\H{o}s--Szekeres Theorem \cite{ErdosSzekeres1935,DanzerGK1963} (cf.\ \cite{MorrisSoltan2000}).
We denote by 
$g^{(d)}(k)$ and $h^{(d)}(k)$
the minimum number of points in $\mathbb{R}^d$ in general position that guarantee the existence of $k$-gon and $k$-hole, respectively.
In contrast to the planar case,  the asymptotic behavior of the higher dimensional Erd\H{o}s--Szekeres numbers $g^{(d)}(k)$ remains unknown for dimension $d \ge 3$.
While a dimension-reduction argument by K\'arolyi \cite{Karolyi2001} combined with Suk's bound \cite{Suk2017} shows
\[
g^{(d)}(k) \le g^{(d-1)}(k-1)+1 \le  \ldots \le g^{(2)}(k-d+2)+d-2 \le 2^{(k-d)+o(k-d)}
\]
for $k \ge d \ge 3$,
the currently best asymptotic lower bound 
$g^{(d)}(k) = \Omega(c^{\sqrt[d-1]{k}})$ 
with $c=c(d)>1$
is witnessed by a construction by K\'arolyi and Valtr \cite{KarolyiValtr2003}.
For dimension~3, Füredi conjectured  $g^{(3)}(k) = c^{\Theta(\sqrt{k})}$ (unpublished, cf.\ \cite[Chapter~3.1]{Matousek2002_book}).

\subsection{Higher dimensional holes}
Since
Valtr \cite{Valtr1992b} gave a construction for any dimension $d$ without $d^{d+o(d)}$-holes, generalizing the idea of Horton \cite{Horton1983},
the central open problem about higher dimensional holes is to determine the largest value $k=H(d)$ 
such that every sufficiently large set in $d$-space contains a $k$-hole.
Note that with this notation we have $H(2)=6$ because $h^{(2)}(6)<\infty$ \cite{Gerken2008,Nicolas2007} 
 and $h^{(2)}(7)=\infty$ \cite{Horton1983}.
Very recently Bukh, Chao and Holzman \cite{BukhChaoHolzman2022} presented 
a construction without $2^{7d}$-holes, 
which further improves Valtr's bound and shows $H(d) < 2^{7d}$. 
Another remarkable result is by Conlon and Lim \cite{ConlonLim2021},
who recent generalized \emph{squared Horton sets} \cite{Valtr1992a} to higher dimensions (point sets which are perturbations of $d$-dimensional grids and which do not contain large holes).

On the other hand,
the dimension-reduction argument by K\'arolyi \cite{Karolyi2001}
also applies to $k$-holes, and therefore
\[
h^{(d)}(k) \le h^{(d-1)}(k-1)+1 \le \ldots \le h^{(2)}(k-d+2)+d-2.
\]
This inequality together with $h^{(2)}(6) < \infty$ 
implies that $h^{(d)}(d+4) < \infty$ and hence $H(d) \ge d+4$.
A slightly better bound of $H(d) \ge 2d+1$ was provided by Valtr~\cite{Valtr1992b}, 
who showed $h^{(d)}(2d+1) \le g^{(d)}(4d+1)$.
However, already in dimension~3 the gap between the upper and the lower bound of $H(3)$ remains huge:
while there are arbitrarily large sets without 23-holes~\cite{Valtr1992b}, 
already the existence of 8-holes remains unknown ($7 \le H(3) \le 22$).

\subsection{Precise values}
As discussed before,
for the planar $k$-gons
$g^{(2)}(5)=9$, 
$g^{(2)}(6)=17$, 
$h^{(2)}(5)=10$,
and $g^{(2)}(k) \le \binom{2k-5}{k-2}+1$ are known.
For planar $k$-holes,  
$h^{(2)}(5)=9$, 
$30 \le h^{(2)}(6) \le 463$,
and $h^{(2)}(k)=\infty$ for $k\ge 7$.

While the values $g^{(d)}(k) =h^{(d)}(k) = k$ for $k \le d+1$ and $g^{(d)}(d+2)=h^{(d)}(d+2)=d+3$ are easy to determine 
(cf.\ \cite{BisztriczkySoltan1994}),
Bisztriczky et al.\  \cite{BisztriczkySoltan1994,BisztriczkyHarborth1995,MorrisSoltan2000}
showed $g^{(d)}(k) = h^{(d)}(k) = 2k-d-1$ for $d+2 \le k \le \frac{3d}{2}+1$.
This, in particular, determines the values for $(k,d)=(3,5),(4,6),(4,7),(5,7),(5,8)$
and shows $H(d) \ge \lfloor\frac{3d}{2}\rfloor+1$.
For $k > \frac{3d}{2}+1$ and $d\ge 3$,
Bisztriczky and Soltan~\cite{BisztriczkySoltan1994}
moreover determined the values
$g^{(3)}(6)=h^{(3)}(6)=9$.
Tables~\ref{tab:high_dim_gons} and~\ref{tab:high_dim_holes} 
summarize the currently best bounds for $k$-gons and $k$-holes in small dimensions.


\begin{table}
	\centering
	\begin{tabular}{ r | rrrrrrrrrrrr}
	 			&$k=4$	&$5$	&$6$	&$7$	&$8$ 	&$9$	&$10$	&$11$  \\
		\hline
		$d=2$  	&$5$	&$9$	&$17$	&	&	&	&	&\\
		$3$		&$4$	&$6$	&$9$	&$13$*	&	&	&	&\\
		$4$		&$4$	&$5$	&$7$	&$9$	&$ \le 13$*	&	&	&\\
		$5$		&$4$	&$5$	&$6$	&$8$	&$10$	&$ \le 13$*	&	&\\
		$6$		&$4$	&$5$	&$6$	&$7$	&$9$	&$11$	&$13$	&\\
	\end{tabular}
	\caption{Known values and bounds for $g^{(d)}(k)$.  
		Entries marked with a star (*) are new. Entries  left blank are
		upper-bounded  by the estimate $g^{(2)}(k) \le \binom{2k-5}{k-2}+1$ \cite{TothValtr2004} and the dimension-reduction argument \cite{Karolyi2001}.}
	\label{tab:high_dim_gons}
\end{table}

\begin{table}
	\centering
	\begin{tabular}{ r | rrrrrrrrrrrrr}
				&$k=4$	&$5$	&$6$	&$7$	&$8$ 	&$9$	&$10$	&$11$	&$12$	&$13$  \\
		\hline
		$d=2$  	&$5$	&$10$	&$30..463$	&$\infty$	&$\infty$	&$\infty$	&$\infty$	&$\infty$	&$\infty$	&$\infty$\\
		$3$		&$4$	&$6$	&$9$	&$ \le 14$*	&?	&?	&?	&?	&?	&?\\
		$4$		&$4$	&$5$	&$7$	&$9$	&$ \le 13$*	&!	&?	&?	&?	&?\\
		$5$		&$4$	&$5$	&$6$	&$8$	&$10$	&$ \le 13$*	&!	&!	&?	&?\\
		$6$		&$4$	&$5$	&$6$	&$7$	&$9$	&$11$	&$13$	&!	&!	&!\\
	\end{tabular}
	\caption{Known values and bounds for $h^{(d)}(k)$. 
		Entries marked with a star (*) are new.
		Entries marked with an exclamation mark (!) are finite because of the estimate $h^{(d)}(2d+1) \le g^{(d)}(4d+1)$ \cite{Valtr1992b}. 
		Entries marked with a question mark (?) are not known to be finite.
	}
	\label{tab:high_dim_holes}
\end{table}


\section{Our results}
In this article 
we generalize our SAT framework from \cite{Scheucher2020_CGTA} 
to higher dimensions (see Section~\ref{sec:SAT_framework}).
Our framework models $k$-gons and $k$-holes in terms of acyclic chirotopes,
which generalize point sets in a natural manner (see Section~\ref{sec:prelim}). 
Using this framework, we have been able 
to verify  previously known results and
to prove the following new upper bounds for higher dimensional Erd\H{o}s--Szekeres numbers and for the variant of $k$-holes  in dimensions~3, 4 and~5, which we moreover conjecture to be sharp.
\begin{theorem}
	\label{thm:bounds}
	It holds 
	$g^{(3)}(7) = 13$, 
	$h^{(3)}(7) \le 14$, 
	$g^{(4)}(8) \le  h^{(4)}(8) \le 13$, and
	$g^{(5)}(9) \le  h^{(5)}(9) \le 13$.
\end{theorem}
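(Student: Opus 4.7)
The plan is to prove each upper bound by encoding the non-existence of the desired configuration as a SAT instance over acyclic chirotopes and verifying unsatisfiability, while the matching lower bound $g^{(3)}(7) \ge 13$ is handled by exhibiting an explicit witness. First I would set up the chirotope SAT model on $n$ elements in rank $d+1$: introduce Boolean variables $\chi(i_1,\dots,i_{d+1})\in\{+,-\}$ for each ordered $(d+1)$-tuple, and add clauses enforcing sign alternation under transpositions, the three-term Grassmann--Pl\"ucker exchange relations that characterize uniform chirotopes, and an acyclicity constraint that rules out non-point-set chirotopes. This is the natural extension to $\mathbb{R}^d$ of the planar framework of \cite{Scheucher2020_CGTA}.

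The next step is to encode the combinatorial event ``$k$-subset $T$ is in convex position'' (and, for holes, ``no point outside $T$ lies in $\conv(T)$'') as clauses over $\chi$. Convex position of $T$ reduces to a sign condition on the $(d+1)$-subsets of $T$ (every $t \in T$ must lie on a facet-supporting hyperplane of $T$), and containment of a point $p$ in $\conv(T)$ is detected via the signs $\chi(q_0,\dots,\widehat{q_i},\dots,q_d,p)$ on the facets of every simplex in a triangulation of $\conv(T)$. Adding the global clause ``no $k$-subset satisfies this conjunction'' yields the final SAT instances. I would then instantiate and run a modern SAT solver on $(n,d,k) \in \{(13,3,7),(14,3,7),(13,4,8),(13,5,9)\}$ and verify each UNSAT result with a DRAT checker. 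Because every point set in general position in $\mathbb{R}^d$ induces an acyclic uniform chirotope of rank $d+1$, unsatisfiability of the chirotope model implies the desired point-set statement, and the trivial inequalities $g^{(d)}(k) \le h^{(d)}(k)$ close the chain in the last two cases.

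For the lower bound $g^{(3)}(7) \ge 13$ I would extract a satisfying assignment of the SAT instance with parameters $(n,d,k)=(12,3,7)$ and the ``no 7-gon'' clauses, producing an abstract acyclic chirotope on 12 elements of rank 4 that contains no 7-gon. This chirotope must then be realized by an explicit point set in $\mathbb{R}^3$ in general position, either from a known Erd\H{o}s--Szekeres-type construction in three dimensions or by numerical realization followed by a rational certificate; a finite check over the $\binom{12}{7}=792$ heptads confirms the absence of 7-gons and establishes $g^{(3)}(7) \ge 13$.

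The main obstacle I anticipate is computational scalability: the number of sign variables and of Grassmann--Pl\"ucker clauses grows like $\binom{n}{d+1}$, so the instances at $n=13,14$ in rank $5$ and $6$ sit at the edge of what current SAT solvers can handle. Mitigating this requires careful symmetry-breaking (fixing a small set of signs to place a distinguished simplex in a canonical orientation), a lean encoding of the convex-position and interior-point conditions that avoids redundant constraints, and standard CNF-size optimizations. A secondary obstacle is realizability for the lower-bound witness, which is not automatic for acyclic chirotopes of rank~4 in general; however, for only 12 points in $\mathbb{R}^3$ this is expected to go through, and in any case the upper-bound statements remain rigorous already at the level of acyclic chirotopes.
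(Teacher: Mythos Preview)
Your proposal is correct and follows essentially the same approach as the paper: encode acyclic rank-$(d{+}1)$ chirotopes via the 3-term Gra{\ss}mann--Pl\"ucker relations, add auxiliary containment variables and the ``no $k$-gon''/``no $k$-hole'' clauses, certify unsatisfiability with a DRAT-checked solver run on the four instances $(n,d,k)$ you list, and handle $g^{(3)}(7)\ge 13$ by realizing a satisfying 12-point chirotope as an explicit configuration in~$\mathbb{R}^3$. One terminological slip: the acyclicity axiom does not ``rule out non-point-set chirotopes'' (there are non-realizable acyclic chirotopes), but since realizable chirotopes form a subclass of acyclic ones, this does not affect the validity of the upper-bound argument.
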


\goodbreak

To determine satisfiability of the generated SAT instances, 
we use the SAT solver \verb|CaDiCaL|, version~1.0.3 \cite{Biere2019}. 
If an instance is unsatisfiable,
\verb|CaDiCaL| generates an
DRAT certificate which can then
be verified by an independent proof checking tool 
such as \verb|DRAT-trim| \cite{WetzlerHeuleHunt2014}.
Details on time running times and used resources are deferred to Section~\ref{sec:resources}.

Our SAT framework found chirotopes that witness that 
all bounds from Theorem~\ref{thm:bounds} 
are sharp in the more general setting of acyclic chirotopes.
The data is available on our supplemental website~\cite{website_ES_highdim}.
For the chirotope in rank~4 without 7-gons we managed to find a realization.
The explicit coordinates of this set of 12 points in $\mathbb{R}^3$ without 7-gons are:
\begin{linenomath}
\begin{align*}
\{&&(526,446,232)&, &(0,756,64)&, &(612,660,342)&, &(708,638,193) &,&\\
	&&(546,563,134)&, &(616,622,174)&, &(414,0,370)&, &(548,594,151)&,&\\
	&&(884,1334,722)&, &(452,668,180)&, &(587,659,156)&, &(579,692,0)&&\}.
\end{align*}
\end{linenomath}

For the other witnessing chirotopes, however, 
we could not find a realization
but we
conjecture that all bounds from Theorem~\ref{thm:bounds} are sharp in the original setting.
It is worth noting that finding realizable witnesses 
is a notoriously hard and challenging task as
the problem of deciding realizability is ETR-complete in general (cf.\ Chapters~7.4 and~8.7 in \cite{BjoenerLVWSZ1993}).
Moreover, since only $2^{\Theta(n \log n)}$ of the $2^{\Theta(n^d)}$ rank $d+1$ chirotopes are realizable by point sets in $\mathbb{R}^d$,
it is very likely that a particular witness is not realizable 
while there might exist others which are.

\medskip

We have also used our framework to 
investigate the existence of 8-gons and 8-holes in 3-space.
\verb|CaDiCaL| 
managed to find 
a rank~4 chirotope on 18 elements without 8-gons 
and
a rank~4 chirotope on 19 elements without 8-holes; 
see the supplemental website~\cite{website_ES_highdim}.
The computations took 100 CPU days and 24 CPU days, respectively.
Since it gets harder to find chirotopes without 8-holes as the number of points increase (8-hole-free chirotopes on 18 or less elements can be found within only a few CPU hours),
this is computational evidence that sufficiently large sets in 3-space contain 8-holes.

\medskip

Last but not least, we have used our SAT framework to
verify the previously known bounds
$g^{(2)}(5) \le 9$
(Makai 1935),
$g^{(2)}(6) \le 17$
\cite{SzekeresPeters2006},
$h^{(2)}(5) \le 10$
\cite{Harborth1978},
and
\mbox{$h^{(3)}(6) \le 9$}
\cite{BisztriczkySoltan1994}.
In particular, we present the first computer-assisted proof 
for Gerken's upper bound $h^{(2)}(6)\le g^{(2)}(9) \le 1717$ \cite{Gerken2008}; 
details are deferred to Section~\ref{sec:hexagons}.

\goodbreak
\section{Preliminaries}
\label{sec:prelim}

Let $ S=\{p_1,\ldots,p_n\}$ be a set of $n$ labeled points in $\mathbb{R}^d$
 in general position 
 with coordinates $p_i = (x_{i,1},\ldots,x_{i,d})$.
We assign to each $(d+1)$-tuple ${i_0},\ldots,{i_d}$ a sign to indicate 
whether the $d+1$ corresponding points $p_{i_0},\ldots,p_{i_d}$ are positively or negatively oriented. Formally, we define a mapping $\chi_S : \{1,\ldots,n\}^d \to \{-1,0,+1\}$ with
\[
\chi_S(i_0,\ldots,i_d)
 = \sgn \det 
\begin{pmatrix}
1  & 1  & \ldots & 1  \\
p_{i_0}& p_{i_1}& \ldots & p_{i_d}\\
\end{pmatrix} 
= \sgn \det 
\begin{pmatrix}
1  & 1  & \ldots & 1  \\
x_{i_0,1}& x_{i_1,1}& \ldots & x_{i_d,1}\\
\vdots&\vdots& &\vdots\\
x_{i_0,d}& x_{i_1,d}& \ldots & x_{i_d,d}\\
\end{pmatrix}.
\]

The properties of the mapping $\chi_S$ are captured in the following definition (cf.\ \cite[Definition~3.5.3]{BjoenerLVWSZ1993}).

\begin{definition}[Chirotope]
	\label{def:chirotope}
	A mapping $\chi\colon \{1,\ldots,n\}^r \to \{-1,0,+1\}$ 
	is a \emph{chirotope} of rank~$r$
	if the following three properties are fulfilled:
	\begin{enumerate}[(i)]
		\item \label{def:chirotope1}
		$\chi$ is not identically zero;
		
		\item \label{def:chirotope2}
		for every permutation $\sigma$ and indices $a_1,\ldots,a_r \in \{1,\ldots,n\}$,
		\[
		\chi(a_{\sigma(1)},\ldots,a_{\sigma(r)}) = \sgn(\sigma) \cdot\chi(a_1,\ldots,a_r) ;
		\]
		
		\item \label{def:chirotope3}
		for all indices $a_1,\ldots,a_r,b_1,\ldots,b_r \in \{1,\ldots,n\}$,		
		\begin{linenomath}
		\begin{align*}
		\text{if}\quad
		\chi(b_i,a_2,\ldots,a_r)\cdot \chi(b_1,\ldots,b_{i-1},a_1,b_{i+1},\ldots,b_r) \ge 0& \quad \text{holds for all $i=1,\ldots,r$} \\
		\text{then we have}\quad  \chi(a_1,\ldots,a_r) \cdot \chi(b_1,\ldots,b_r) \ge 0&. 
		\end{align*}
		\end{linenomath}
	\end{enumerate}
\end{definition}

It is not hard to verify that the mapping $\chi_S$ is a chirotope of rank $r=d+1$.
The first item of Definition~\ref{def:chirotope} is fulfilled because the point set~$S$ is in general position 
and therefore not all points lie in a common hyperplane. 
The second first item is fulfilled because,
by the properties of the determinant, we have
\[
\det(a_{\sigma(1)},\ldots,a_{\sigma(r)}) = \sgn(\sigma) \cdot \det(a_1,\ldots,a_r)
\]
for any $r$-dimensional vectors $a_1,\ldots,a_r$ and any permutation $\sigma$ of the indices $\{1,\ldots,r\}$.
Since we can consider the homogeneous coordinates $(1,p_1),\ldots,(1,p_n)$ of our $d$-dim\-ension\-al point set $S$ as $(d+1)$-dimensional vectors,
the above relation also has to be respected by $\chi_S$.
To see that $\chi_S$ also fulfills the third item,
recall that the well-known Gra{\ss}mann--Pl\"ucker relations (see e.g.\ \cite[Chapter~3.5]{BjoenerLVWSZ1993}) 
assert that any $r$-dimensional vectors $a_1,\ldots,a_r,b_1,\ldots,b_r$ fulfill%
\footnote{
	The Gra{\ss}mann--Pl\"ucker relations can be derived 
	 using  Linear Algebra basics 
	as outlined: 
	Consider the vectors $a_1,\ldots,a_r,b_1,\ldots,b_r$ 
	as an $r \times 2r$ matrix and
	apply row additions to obtain the echelon form.
	If the first $r$ columns form a singular matrix,
	then $\det(a_1,\ldots,a_n) = 0$ and 
	both sides of the equation vanish by a simple column multiplication argument.
	Otherwise, we can assume 
	that the first $r$ columns form an identify matrix.
	Since the determinant is invariant to {row additions},
	none of the terms in the Gra{\ss}mann--Pl\"ucker relations 
	is effected during the transformation,
	and the statement then follows from Laplace expansion. 
}
\begin{linenomath}
\begin{align*}
\det(a_1,\ldots,a_r) \cdot& \det(b_1,\ldots,b_r)
\\
&= \sum_{i=1}^{r} \det(b_i,a_2,\ldots,a_r)\cdot \det(b_1,\ldots,b_{i-1},a_1,b_{i+1},\ldots,b_r).
\end{align*}
\end{linenomath}
In particular, if all summands on the right-hand side are non-negative
then also the left-hand side must be non-negative.

In this article, 
we formulate Erd\H{o}s--Szekeres-type problems in terms of chirotopes.
In a general chirotope, we can reverse the sign of all $r$-tuples which contain a fixed index~$i$
and again obtain a valid chirotope.
However,
if we apply such a reversal to a chirotope, which is induced by a point set,
the obtained chirotope might not be induced by any point set.
More specifically, if the points $p_1,\ldots,p_r$ determine a positively oriented simplex 
(i.e., $\chi(1,2,\ldots,r) = +$),
then another point $p_{r+1}$ might lie inside the simplex 
(i.e., $\chi(1,2,\ldots,i-1,r+1,i+1,\ldots,r) = +$ for all $i=1,\ldots,r$),
but it cannot fulfill $\chi(1,2,\ldots,i-1,r+1,i+1,\ldots,r) = -$ for all $i=1,\ldots,r$.
Chirotopes with this property 
are called \emph{acyclic} 
(see e.g.\ \cite{BjoenerLVWSZ1993}).
It is worth noting besides chirotopes there exist 
several cryptomorphic axiomatizations for oriented matroids 
and that acyclic chirotopes
have been investigated under several different names:
\emph{abstract order types} (cf.\ \cite{AichholzerAurenhammerKrasser2001,Krasser2003,AichholzerKrasser2006,Finschi2001,FinschiFukuda2002}),
\emph{CC systems} (cf.\ \cite{Knuth1992}), or
\emph{pseudoconfigurations of points} (cf.\ \cite{BjoenerLVWSZ1993}).
We refer the interested reader to the homepage of (acyclic) oriented matroids \cite{FinschiDBOM};
for non-degenerate rank~3 case, see \cite{AichholzerOTDB}.
It is also worth noting that 
for any acyclic chirotope of rank~3
there exists a reordering of the elements 
such that it becomes a \emph{signotope} of rank~3 (cf.\ \cite{FelsnerWeil2001}).

In Section~\ref{sec:SAT_framework}
we present a SAT model for (acyclic) chirotopes.
While the axioms from Definition~\ref{def:chirotope} require $\Theta(n^{2r})$ constraints,
we can significantly reduce this number to $\Theta(n^{r+2})$ by using an axiom system based on the 3-term Gra{\ss}mann--Pl\"ucker relations.

\begin{theorem}[3-term Gra{\ss}mann--Pl\"ucker relations, {\ \cite[Theorem~3.6.2]{BjoenerLVWSZ1993}}]
	
	\label{thm:3term}
	A mapping $\chi\colon \{1,\ldots,n\}^r \to \{-1,0,+1\}$ 
	is a {non-degenerate chirotope} of rank~$r$
	if the following three properties are fulfilled:
	\begin{enumerate}[(i)]
		
		\item \label{def:3term1}
		for every $r$ distinct indices $a_1,\ldots,a_r \in \{1,\ldots,n\}$,
		\[ \chi(a_1,\ldots,a_r) \neq 0;
		\]
		
		\item \label{def:3term2}
		for every permutation $\sigma$ and indices $a_1,\ldots,a_r \in \{1,\ldots,n\}$,
		\[
		\chi(a_{\sigma(1)},\ldots,a_{\sigma(r)}) = \sgn(\sigma) \cdot\chi(a_1,\ldots,a_r) ;
		\]
		
		\item \label{def:3term3}
		for any $a_1,\ldots,a_r,b_1,b_2 \in \{1,\ldots,n\}$,
		\[
		\begin{array}{ll}
		\text{if}\quad&
		\chi(b_1,a_2,\ldots,a_r)\cdot \chi(a_1,b_{2},a_{3},\ldots,a_r) \ge 0\\
		\text{and}\quad&
		\chi(b_2,a_2,\ldots,a_r)\cdot \chi(b_1,a_{1},a_{3},\ldots,a_r) \ge 0\\
		\text{then}\quad  &\chi(a_1,a_2,\ldots,a_r) \cdot \chi(b_1,b_2,a_3,\ldots,a_r) \ge 0. 
		\end{array}
		\]
	\end{enumerate}
\end{theorem}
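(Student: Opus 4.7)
The plan is to establish the equivalence in both directions. The \emph{easy direction} is to verify that any non-degenerate chirotope in the sense of Definition~\ref{def:chirotope} automatically satisfies (i)--(iii) of the theorem. Conditions (i) and (ii) are either the non-degeneracy hypothesis or identical, and (iii) is recovered from Definition~\ref{def:chirotope}(iii) by specializing $b_3 = a_3,\ldots, b_r = a_r$: for every $i \ge 3$, the factor $\chi(b_i,a_2,\ldots,a_r) = \chi(a_i,a_2,\ldots,a_r)$ already vanishes by antisymmetry (since $a_i$ appears twice), so the $i$th hypothesis of the general Grassmann--Plücker relation is automatic, and only the $i=1,2$ hypotheses remain, which are precisely the premises of (iii).

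The \emph{hard direction} is showing that (i)--(iii) imply the full Grassmann--Plücker relation of Definition~\ref{def:chirotope}(iii). My plan is to proceed by induction on the number $t$ of indices $j \in \{3,\ldots,r\}$ for which $b_j \notin \{a_3,\ldots,a_r\}$, after first using antisymmetry (ii) to reorder the $b_j$'s so that the ``matching'' ones sit in positions $3,\ldots,r-t$. The base case $t=0$ reduces, modulo this reordering and cancellation via (ii), to the 3-term axiom (iii). For the inductive step I would introduce two carefully chosen intermediate tuples -- each obtained by swapping a single $b_j$ into the $a$-tuple (or vice versa) -- so that each agrees in strictly more positions with one of the two original tuples. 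Applying the induction hypothesis to these smaller instances yields sign information which, combined via one further application of (iii) with the right choice of pivot indices, propagates non-negativity to the target product $\chi(a_1,\ldots,a_r)\cdot\chi(b_1,\ldots,b_r)$.

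The main obstacle I expect is the combinatorial bookkeeping in the inductive step: choosing the intermediate tuples and pivot indices so that (a) the premises required by the induction hypothesis are exactly the non-negativity assumptions already granted (possibly after re-indexing), and (b) every intermediate chirotope value entering a sign comparison is non-zero. Point (b) is where non-degeneracy (i) does the heavy lifting -- in a general chirotope, a vanishing intermediate sign would break the propagation of inequalities, while here it is ruled out. Since this bookkeeping essentially reproduces the strong basis-exchange argument of oriented matroid theory, I would anchor my write-up against \cite[Sections~3.5--3.6]{BjoenerLVWSZ1993}, treating the proof of Theorem~3.6.2 there as the detailed reference for the induction steps sketched above, and only rederive the specific pivot choices needed for our formulation.
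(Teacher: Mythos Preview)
The paper does not give its own proof of this theorem: it is quoted verbatim from \cite[Theorem~3.6.2]{BjoenerLVWSZ1993} and used as a black box to reduce the number of constraints in the SAT encoding from $\Theta(n^{2r})$ to $\Theta(n^{r+2})$. So there is nothing in the paper to compare your argument against.

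That said, your sketch is sound in outline. The easy direction is correct exactly as you wrote it: specializing $b_j=a_j$ for $j\ge 3$ in Definition~\ref{def:chirotope}(iii) kills all but the first two hypotheses by antisymmetry and yields the 3-term condition. For the hard direction, an induction on the number of ``unmatched'' positions between the $a$- and $b$-tuples is indeed one way to bootstrap from 3-term to full exchange, and you are right that non-degeneracy is what makes the sign propagation go through (in the general case one needs the more delicate argument via basis orientations and circuit signatures). Your honesty about the bookkeeping being the real work, and your plan to follow \cite[Sections~3.5--3.6]{BjoenerLVWSZ1993} for the pivot choices, is exactly the right call --- that reference \emph{is} the proof, and there is no shorter route that I would trust without seeing every step written out.
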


\subsection{Gons and holes}
\label{ssec:GonsAndHoles}

Carath\'eodory's theorem asserts that a $d$-dimensional point set is in convex position if and only if all $(d+2)$-element subsets are in convex position. 
Now that a point $p_{i_{d+1}}$ lies in the convex hull of $\{p_{i_0},\ldots,p_{i_d}\}$ 
if and only if $\chi(i_0,\ldots,i_d) = \chi(i_0,\ldots,i_{j-1},i_{d+1},i_{j+1},\ldots,i_d)$ holds for every $j \in \{0,\ldots,d\}$,
we can fully axiomize $k$-gons and $k$-holes
solely using the information of the chirotope, 
that is, the relative position of the points. 
Note that $\chi(i_0,\ldots,i_d) = \chi(i_0,\ldots,i_{j-1},i_{d+1},i_{j+1},\ldots,i_d)$ holds if and only if 
the two points $p_{i_j}$ and $p_{i_{d+1}}$
lie on the same side of
the hyperplane determined by $\{p_{i_0}\ldots,p_{i_{j-1}},p_{i_{j+1}},\ldots,p_{i_d}\}$.

\section{The SAT framework}
\label{sec:SAT_framework}

For the proof of Theorem~\ref{thm:bounds}, we proceed as following:
To show $g^{(d)}(k) \le n$ or $h^{(d)}(k) \le n$, respectively,
assume towards a contradiction that 
there exists a set $S$ of $n$ points in $\mathbb{R}^d$ in general position,
which does not contain any $k$-gon or $k$-hole, respectively.
The point set $S$ induces an acyclic chirotope $\chi$ of rank~$d+1$,  
which can be encoded using $n^{d+1}$ Boolean variables.
The chirotope $\chi$ fulfills the 
$\Theta(n^{d+3})$ conditions from Theorem~\ref{thm:3term},
which we can encode as clauses.

Next, we introduce auxiliary variables for all $(d+2)$-tuples $i_0,\ldots,i_{d+1} \in \{1,\ldots,n\}$
to indicate 
whether the hyperplane determined by $\{p_{i_0},\ldots,p_{i_{d-1}}\}$ separates the two points $p_{i_d}$ and $p_{i_{d+1}}$,
and 
whether the point $p_{i_{d+1}}$ lies in the convex hull of $\{p_{i_0},\ldots,p_{i_d}\}$.
As discussed in Section~\ref{ssec:GonsAndHoles},
the values of these auxiliary variables are fully determined by the chirotope variables.

Using these auxiliary variables 
we can formulate $\binom{n}{k}$ clauses to assert that there are no $k$-gons in~$S$:
Among every subset $X \subset S$ of size $|X|=k$ there is
at least one point $p \in X$ which is contained in the convex hull of $d+1$  points of~$X \setminus \{p\}$.
Moreover, since the convex hull of $X \setminus \{p\}$ can be triangulated in a way such that all simplices of the triangulation contain the leftmost point of $X \setminus \{p\}$,
the point $p$ lies in a 
simplex determined by the leftmost and $d$ further points from $X \setminus \{p\}$.
To assert that there are no $k$-holes in~$S$, we can proceed in a similar manner:
For every subset $X \subset S$ of size $|X|=k$ there is
at least one point $p \in S$ which is contained in the convex hull of $d+1$  points of~$X \setminus \{p\}$. 
And again, $p$ lies in a simplex determined by the leftmost and $d$ further points from $X \setminus \{p\}$.

Altogether, we can now create a
Boolean satisfiability instance that is satisfiable if and only if 
there exists a rank $d+1$ chirotope on $n$ elements 
without $k$-gons or $k$-holes, respectively.
If the instance is unsatisfiable,
no such chirotope (and hence no point set) exists,
and we have $g^{(d)}(k) \le n$ or $h^{(d)}(k) \le n$, respectively.

The python program for creating the instances and further technical information is available on our supplemental website~\cite{website_ES_highdim}.

\section{Running times and resources}
\label{sec:resources}

All our computations were performed on a single CPU. 
Since the computations 
required a lot of resources (RAM and disk space), 
we made use of the computing cluster from the Institute of Mathematics at TU Berlin. 
Below are the running times and used resources for each computation.

\goodbreak

\begin{itemize}
	\item
	 $g^{(3)}(7) \le 13$: 
	 The size of the instance is about 245~MB
	 and \verb|CaDiCaL| managed to prove unsatisfiability in about 2~CPU~days.
	 Moreover, the unsatisfiability certificate created by \verb|CaDiCaL| is about 39~GB
	 and the  
	 \verb|DRAT-trim| verification took about 1~CPU~day.

	 \item
	 $h^{(3)}(7) \le 14$:
	 The size of the instance is about 433~MB
	 and \verb|CaDiCaL| (with parameter \verb|--unsat|) 
	 managed to prove unsatisfiability in about 19~CPU~days.
     Moreover, the unsatisfiability certificate created by \verb|CaDiCaL| is about 314~GB
     and the \verb|DRAT-trim| verification took about 12~CPU~days.
     
	 \item
	 $h^{(4)}(8) \le 13$:
	 The size of the instance is about 955~MB
	 and \verb|CaDiCaL| managed to prove unsatisfiability in about 7~CPU~days.
	 Moreover, the unsatisfiability certificate created by \verb|CaDiCaL| is about 297~GB
	 and the \verb|DRAT-trim| verification took about 6~CPU~days.

	 \item
	 $h^{(5)}(9) \le 13$:
	 The size of the instance is about 4.2~GB
	 and \verb|CaDiCaL| managed to prove unsatisfiability in about 3~CPU~days.
	 Moreover, the unsatisfiability certificate created by \verb|CaDiCaL| is about 117~GB
	 and the  
	 \verb|DRAT-trim| verification took about 3~CPU~days. 
	 
\end{itemize}

\section{Verification of previous results}
\label{sec:hexagons}

We have also used our SAT framework to
verify the previously known bounds.
For the unsatisfiability of $g^{(2)}(6) \le 17$ \cite{SzekeresPeters2006},
\verb|CaDiCaL|  takes about 10~CPU minutes to create a DRAT-certificate.
The generated DRAT-certificate has about 668~MB and can be verified by \verb|DRAT-trim| within 12~CPU minutes.
The instances 
$g^{(2)}(5) \le 9$
(Makai 1935),
$h^{(2)}(5) \le 10$
\cite{Harborth1978},
and
$h^{(3)}(6) \le 9$
\cite{BisztriczkySoltan1994}
can be solved and verified within only a few CPU seconds 
and the respective DRAT-certificates have only a few~MB.
It is worth noting that our new framework performs much faster on proving $g^{(2)}(6) \le 17$ than our old one from \cite{Scheucher2020_CGTA} (see also \cite{Maric2019})
because we now make use of a triangulation (see Section~\ref{sec:SAT_framework}) 
which significantly reduces the number of literals in certain clauses and which apparently reduces the computation time.

\subsection{Existence of planar 6-holes}

Last but not least,
we present the a computer-assisted proof 
for the existence of 6-holes in planar point sets. 
To our knowledge, this is the first alternative proof 
to Gerken's 34 pages of case distinction~\cite{Gerken2008}.
Note that the bound even applies to the more general setting of acyclic chirotopes of rank~3.

\begin{theorem}[\cite{Gerken2008}]
	\label{thm:hexagons}
$h^{(2)}(6)\le g^{(2)}(9)$.
\end{theorem}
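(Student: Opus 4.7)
The plan is to use the SAT framework of Section~\ref{sec:SAT_framework} to automate, up to a minimality reduction, the case distinction that in Gerken's original proof spans 34 pages. The statement concerns acyclic rank-3 chirotopes of unbounded size, so the central challenge is to bound the relevant instances down to a finite SAT check that can be discharged mechanically.

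First, I would set up the SAT encoding on $n$ elements as follows: the usual chirotope variables and the 3-term Grassmann--Pl\"ucker clauses of Theorem~\ref{thm:3term}, together with the acyclicity clauses, describe an acyclic chirotope of rank~3 on $n$ labeled elements; additional unit clauses pin nine fixed labels $v_1,\ldots,v_9$ to lie in convex position, forcing them to form a 9-gon; and the auxiliary simplex-containment variables introduced in Section~\ref{sec:SAT_framework} are used to enforce that no 6-subset of the $n$ elements is a 6-hole. An assignment satisfies this instance if and only if there exists an acyclic rank-3 chirotope on $n$ elements containing a 9-gon but no 6-hole, so showing this SAT instance unsatisfiable for every sufficiently large $n$ yields the theorem.

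To handle the unbounded number of ``interior'' elements, I would perform a minimality reduction. Let $\chi$ be a counterexample of minimum size $n$, containing $V=\{v_1,\ldots,v_9\}$ in convex position but no 6-hole. If $n=9$ then any 6-subset of $V$ is already a 6-hole, a contradiction, so $n\ge 10$. For every element $p\notin V$, removing $p$ cannot destroy the 9-gon, so by minimality $\chi\setminus\{p\}$ must contain a 6-hole $W_p$; this $W_p$ is a 6-gon in $\chi$ whose unique element in $\conv(W_p)\setminus W_p$ is precisely $p$. The combinatorial structure of the arrangement of the $\binom{9}{2}$ chords of the 9-gon partitions its interior into finitely many cells, and the constraint that each non-$V$ element is ``pinned'' by its witness 6-gon should bound $n$ by an explicit constant $n_0$. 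Once such an $n_0$ is established, \verb|CaDiCaL| can settle the instances of size up to $n_0$ and emit DRAT certificates to be independently verified by \verb|DRAT-trim|, in analogy with the other bounds verified in Section~\ref{sec:hexagons}.

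The main obstacle is producing an explicit bound $n_0$ in the minimality reduction. A crude count based on the $\binom{9}{6}=84$ six-subsets of $V$ is insufficient because witness 6-gons $W_p$ may share vertices across different interior elements or may themselves contain other non-$V$ elements as vertices, so the witness structure can cascade. Extracting a clean combinatorial bound -- or, alternatively, adding symmetry-breaking and ``every non-$V$ element is pinned by a witness 6-gon'' clauses directly to the SAT instance so that unsatisfiability at a modest $n$ already implies unsatisfiability at every larger $n$ -- is the conceptual heart of the proof, and the step where the heavy combinatorial work of Gerken's case analysis is replaced by a single, certified SAT call.
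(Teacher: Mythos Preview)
Your proposal has a genuine gap. You correctly recognize that the whole argument hinges on an explicit bound $n_0$ so that only finitely many SAT instances need to be discharged, but you do not produce one. Your suggested minimality mechanism---remove an interior point $p$, obtain a witness 6-gon $W_p$ pinning $p$---does not obviously terminate in a bound, and you yourself observe that the witnesses can share vertices and cascade. Without $n_0$ the SAT phase never gets off the ground.

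The paper obtains $n_0=22$ by a short and entirely different reduction. First restrict attention to $S' = S \cap \conv(A)$, the points lying inside the 9-gon $A$. Decompose $S'$ into convex hull layers: $A$ is the first layer, $B$ the extreme points of what remains, $C$ the extreme points of what remains after removing $B$, and $D$ everything further in. Lemma~1 of Valtr~\cite{Valtr2009} shows that a nonempty fourth layer already forces a 6-hole, so $D=\emptyset$. The third layer $C$ is itself in convex position and hence a $|C|$-hole, giving $|C|\le 5$. For the second layer one uses a different minimality: choose $A$ to be a \emph{minimal 9-gon} in $S'$ (one containing no other 9-gon inside), which forces $|B|\le 8$, since $|B|\ge 9$ would yield a smaller 9-gon. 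Altogether $|S'|\le 9+8+5=22$, and the SAT solver then handles the fourteen cases $n=9,\ldots,22$. The ingredients you are missing are the restriction to $\conv(A)$, the convex-layer decomposition, the appeal to Valtr's lemma to eliminate the fourth layer, and the minimal-9-gon trick for the second layer; your witness-pinning idea does not lead to any of them.
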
 

\begin{proof}
Suppose towards a contradiction that 
there exists a set $S$ of $g^{(2)}(9)$ points in the plane which does not contain a 6-hole.
By definition of $g^{(2)}(9)$, 
there exists a subset $A \subseteq S$ of $|A|=9$ points which determine a 9-gon.
We now restrict our attention to $S' = S \cap \conv(A)$.

Let $B$ denote the extremal points of $S' \cap \interior( \conv(A) )$,
let $C$ denote the extremal points of $S' \cap \interior( \conv(B) )$,
and let $D = S' \cap \interior ( \conv(C) )$. 
Here $\interior( X )$ denotes the interior of~$X$.
In other words,
$A$, $B$, and $C$ span the first, second, and third convex hull layer 
of $\conv(S')$, respectively,
and $D$ contains all points that lie on or inside the fourth convex hull layer.

Lemma~1 from \cite{Valtr2009}
now asserts that $D = \emptyset$, as otherwise $S'$ (and thus $S$) would contain a 6-hole.
Since the points of $C$ form a $|C|$-hole,
it clearly holds $|C| \le 5$.
Furthermore, we may assume that $A$ is a \emph{minimal 9-gon in~$S'$},
that is, $A$ is the only 9-gon in~$S'$,
as otherwise we could iteratively choose 9-gons inside~$S'$ 
which contain fewer 9-gons until we end up with a minimal one.
Hence, we have $|B| \le 8$,
and in total 
\[
|S'| \le |A|+|B|+|C| \le 9+8+5 =22.
\]

By slightly modifying our framework from Section~\ref{sec:SAT_framework}, 
we can assert that the 9~points of $A$ form a 9-gon 
and all other points of $S' \setminus A$ lie in the interior of $\conv(A)$.
Using \verb|CaDiCaL|  
we managed to verify that, for every $n = 9,\ldots,22$, 
there exists no set of $S'$ with $|S'|=n$.
This contradiction shows that 
every a set of $g^{(2)}(9)$ points in the plane contains a 6-hole.
\end{proof}

The total size of all 14 instances in the proof of Theorem~\ref{thm:hexagons} 
is about 3.8~GB.
The total computation time of \verb|CaDiCaL| and \verb|DRAT-trim|
to create and verify the DRAT-certificates for the instances is about 1 CPU hour.
The total size of the generated DRAT-certificates is about 6.2~GB.
The python program for creating the instances and further technical information is available on our supplemental website~\cite{website_ES_highdim}.


\section{Discussion}
We have presented a SAT framework to investigate problems on (acyclic) chirotopes. 
Using modern SAT solvers we
obtained new results for Erd\H{o}s--Szekeres-type problems in dimensions $d=3,4,5$,
and we verified various previously known results.
In particular, we have verified Gerken's proof 
for the existence of 6-holes in planar point sets.
Based on computation evidence 
(acyclic chirotopes on $n = 29$ points without 6-holes can be found within a few minutes, 
but for $n=30$ the computations do not terminate within months),
we conjecture
that every set of 30 points in the plane contains at least one 6-hole, that is, $h^{(2)}(6) = 30$.

Since some of the unsatisfiability certificates generated in course of this article grew very large,
it would be interesting to further optimize the SAT model by breaking further symmetries so that the solver can terminate faster and the obtained DRAT certificates become smaller.

Last but not least we want to mention that 
our framework can also be used to tackle 
other problems on higher dimensional point sets or oriented matroids. 
By slightly adapting our model,
we managed to answer a Tverberg-type question by Fulek et al.\ (see Section~3.2 in \cite{FulekGKVW2019} for more details)
and we managed to find a chirotope which is a contact representation of particular hypergraph, partially answering a question by Evans et al.~\cite{EvansRSSW2019}.

{
\small
\bibliographystyle{alphaabbrv-url}
\bibliography{references}
}

\end{document}